\newtheorem{prop}{Proposition}
\newtheorem{remark}{Remark}
\theoremstyle{definition}
\begin{document}

%
% paper title
% can use linebreaks \\ within to get better formatting as desired

%\setlength{\droptitle}{-4em}     % Eliminate the default vertical space
%\addtolength{\droptitle}{-4pt} 

\title{\LARGE   \hspace{50 cm} Distributed Uplink Power Control in an Ultra-Dense Millimeter Wave Network: A Mean-field Game Approach }

\author[D. Kopta et al.]
{\normalsize Nof Abuzainab, Walid Saad and Allen B. MacKenzie
	\\ \vspace{-0.4 cm}
	Wireless@VT, Department of Electrical and Computer Engineering, Virginia Tech, Blacksburg, VA, USA, Emails:\{nof, walids,  mackenab\}@vt.edu\\ 
	\vspace{-6ex}
\thanks{This research was supported by the U.S. National Science Foundation under Grant CNS-1526844,}
%\thanks{A preliminary version of this work \cite{confc}  was submitted for conference publication.}
}

\maketitle

% Note that keywords are not normally used for peerreview papers.
%\begin{IEEEkeywords}
%\end{IEEEkeywords}

% For peer review papers, you can put extra information on the cover
% page as needed:
% \ifCLASSOPTIONpeerreview
% \begin{center} \bfseries EDICS Category: 3-BBND \end{center}
% \fi
%
% For peerreview papers, this IEEEtran command inserts a page break and
% creates the second title. It will be ignored for other modes.
\IEEEpeerreviewmaketitle
\vspace{-0.9 cm}

%with density $\phi_b$
%with density $\phi_u$

\begin{abstract}
In this paper, a novel mean-field game framework is proposed for uplink power control in an ultra-dense millimeter wave network. The proposed mean-field game considers the time evolution of the mobile users' orientations as well as the energy available in their batteries, under adaptive user association. The objective of each mobile user is then to find the optimal transmission power that maximizes its energy efficiency. The expression of the energy efficiency is analytically derived for the realistic case of a finite size network. Simulation results show that the proposed approach yields gains of  up to $24\%$, in terms of energy efficiency, compared to a baseline in which the nodes transmit according to the path loss compensating power control policy.
\end{abstract}

\vspace{-0.4 cm}
%\section{Finite Millimeter wave Network}
\section{Introduction}

Millimeter wave communications technology is seen as a promising approach to support the high capacity demands of next-generation wireless networks \cite{mmwavecap}. The availability of a large spectrum bandwidth at the millimeter wave frequency bands can potentially be leveraged to meet the the stringent quality-of-service (QoS) requirements of emerging wireless services, particularly in ultra dense in-venue networks such as in stadiums, arenas, shopping malls and transportation hubs.
However, the millimeter wave signal gets severely attenuated if a line-of-sight signal (LOS) does not exist between the base station (BS) and the mobile user (MU). Thus, adaptive resource allocation and user association schemes are necessary to efficiently provide users with their needed service requirements \cite{omid}.

Moreover, in an ultra dense millimeter wave scenario, interference is prominent \cite{slmz}, which necessitates the design of proper interference mitigation schemes \cite{mmwavedens1,mmwavedens2,mmwavedens3} tailored for  millimeter wave networks.
In \cite{mmwavedens1}, a coalition formation game is proposed for user association and bandwidth allocation for an ultra-dense millimeter wave network.
 A centralized approach is  proposed in \cite{mmwavedens2} for the problem of user association and power control in a downlink ultra-dense setting for a millimeter wave network.
A noncooperative game is proposed in \cite{mmwavedens3} for distributed uplink power control of cellular MIMO spatial multiplexing systems.
However, these existing approaches  are either centralized \cite{mmwavedens1} or require significant coordination and control overhead \cite{mmwavedens1,mmwavedens2,mmwavedens3}, which renders them difficult to practically deploy \cite{cognitive}.

% Further, the existing approaches consider static network settings, and thus, there is a need for for adaptive schemes that are suitable for 

One of the promising tools for scalable distributed resource allocation and coordination in a dynamic ultra-dense network settings are mean-field games (MFG). MFG are apropos since they can cope with a massive number of interacting agents in a dynamic environment. Yet, the existing mean-field approaches \cite{meanfielddens1,meanfielddens2} are not specifically designed for millimeter wave networks,  and, consequently, they do not consider the characteristics and challenges of millimeter wave networks.

The main contribution of this paper is a novel MFG-based framework for uplink power control in an ultra-dense millimeter wave network. We formulate the uplink power control as a mean-field game that takes into account the characteristics of millimeter wave networks. In particular, we consider directional beamforming by the base stations and the mobile users. We consider the time evolution of the users' orientations and the energy available in their batteries. Further, we model the randomness of the deployment of BSs as well as MUs using stochastic geometry \cite{mmwavecoverage}. Thus, we consider adaptive user association, in which each user, at each time instant, connects with the BS that provides the required quality-of-service requirement of the MU. We derive the expressions for the user association distributions for a finite size network, as opposed to the prior work that assumes infinite network size. Our results show that the proposed approach can improve energy efficiency by up to $24\%$, compared to a baseline in which the nodes transmit according to a path loss compensating power control policy.

\vspace{-0.4 cm}
\section{System Model}
Consider the uplink scenario in an ultra dense millimeter wave network in which a set $\mathcal{U}$ of  MUs  are transmitting to a set $\mathcal{B}$ of BSs. The locations of the BSs and MUs are distributed according to a uniform Poisson point processes (PPP) of densities $\lambda_b$ and $\lambda_u$, respectively, but are confined to a finite area $\mathcal{A} \subset \mathbb{R}^2$. The area $\mathcal{A}$  is considered to be a ball $\boldsymbol{b}(\boldsymbol{o},R_{\max})$ centered at the origin of radius $R_{\max}$ as done in \cite{slmz}. Since millimeter wave signals are severely attenuated by blockage, the channel quality is highly dependent on whether a LOS signal exists or not. 
%In fact,
% the probability that a line of sight (LOS) signal exists between MU $i$ and BS $j$ decreases with distance $r$ and is thus given by $p(r)=exp(-\frac{r}{u})$. 
Measurements (such as in \cite{mmwavepath}) have shown that different
path loss models exist for the LOS and the non-line-of-sight (NLOS) cases. Thus, $A_L$ and $\alpha_L$ correspond to the path loss coefficient and exponent in case of LOS, and $A_N$ and $\alpha_N$ correspond to the path loss coefficient and exponent in case of NLOS.
%Also, the channel between any BS and MU is affected by small scale fading.  Measurements \cite{mmwavefad} have shown that the small scale fading gain in millimeter wave networks can be approximated by a Rician distribution which depends on whether LOS signal exists or not. Thus, the dynamics of the small scale fading coefficient $h$  between any user and BS is modeled by a mean reverting Ornstein-Uhlenbeck process \cite{ornsteinfad} and hence is given by
%\begin{equation}
%\frac{dh}{dt}=\frac{1}{\tau}(\mu - h(t))+\nu\frac{d}{dt}W^B(t), \label{channeldyn}
%\end{equation}
%where $\frac{1}{\tau}$ is the rate of mean reversion, $\mu$ is the long term mean of the process and it depends on the channel statistics and characteristics. Thus, $\mu=\mu_L$ in case of LOS and $\mu=\mu_N$, otherwise. The variance $\nu$ depends on whether the channel is slow or fast fading, and $W^B(t)$ is a standard Wiener process. We denote by $h_L$ and $h_N$ the fading coefficient for an LOS and NLOS link, respectively. All channels are assumed to vary independently of each others.

In our model, the MUs and the BSs can block the millimeter wave signal. Other blockers, such as trees or other humans which are not participating in communication, are also considered. The locations $\boldsymbol{L_e}=(\boldsymbol{l}_{e,i})_{i \in \mathcal{E}}$ of the  external blockers are distributed according to a uniform Poisson point process  with density $\lambda_e$, where $\mathcal{E}$ is the set of blockers.
All blockers are assumed to have a circular shape with radius $r_B$ as done in \cite{dirppp}.  Then, the blockage probability $p_B(l)$ of any MU $i$'s signal to BS $j$ at distance $l$ is derived as 
$p_B(l)=1-e^{-\lambda(l-r_B)r_B}$
where $\lambda=\lambda_b + \lambda_u + \lambda_e$.
%\begin{equation}
%p_B(r)=
%  \begin{cases}
%    1-(1-\frac{1-rR+\frac{\pi R^2}{8}-\kappa}{|\mathcal{A}|})^{N_b}, & \text{if } -R \leq r \leq R_{\max}-\frac{W}{2}, \\
%   1-(1-\frac{rR-\kappa+\nu}{|\mathcal{A}|})^{N_b}, & \text{if}  R_{\max}-\frac{W}{2} \leq r \leq R_{\max},
%  \end{cases}
%\end{equation}
%where $N_b$ is the total number of blockers given by $N_b=\frac{(R_{\max}-r_0)^2}{R^2_{\max}}|\mathcal{B}|+|\mathcal{U}|+|\mathcal{X}|-1$,
%\begin{equation}
%\hspace{-8 cm}\kappa=\frac{R}{2}\sqrt{\frac{3R^2}{4}}+R^2\arcsin(\frac{1}{2}),
%\end{equation}
%\begin{eqnarray}
%\nu&=&\Big(\frac{R}{2}\Big)^2 \arcsin\Big(\frac{R^2_{\max}-(\frac{R}{2})^2+r^2}{2rR}\Big)+R_{\max} \arccos\Bigg(\frac{R^2_{\max}-\Big(\frac{R}{2}\Big)^2+r^2}{2rR_{\max}}\Bigg)\nonumber\\
%&&-2\sqrt{s(s-r)(s-\frac{R}{2})(s-R_{\max})},
%\end{eqnarray}
%\begin{equation}
%\hspace{-11 cm}s=\frac{R_{\max+r+\frac{R}{2}}}{2}
%\end{equation}

Due to the severe attenuation of millimeter wave signals with distance, all nodes employ directional beamforming. Thus, each millimeter wave BS forms a beam of width $W$ around the direction $(\theta,\beta)$ where $\theta$ and $\beta$  are the spherical elevation and azimuthal angle, respectively.
Hence, according to the sectored antenna model \cite{mmwavecoverage}, each BS's antenna gain at the main lobe is $G_B=\frac{2}{1-cos(\frac{W}{2})}$ and at the side lobe $g_B$.
We assume that the orientations of the BSs are uniformly distributed over $[0,2\pi]$ and are time invariant.
 Each MU, on the other hand, forms a narrow beam of width $w$ around the direction $(\rho,\phi )$ where $\rho$ and $\phi$  are the spherical elevation and azimuthal angle, respectively. Further, due to random changes in the orientation of each MU with time, the azimuthal angle $\phi$ ($\phi \in [0,2\pi]$) of each user evolves as follows \footnote{In in-venue network scenarios, users are characterized by low mobility. Hence, the dynamics of the users' mobility are not considered.}
\begin{equation}
{d \phi}=\mu_\phi dt + \sigma_\phi dW^{u}(t) + d N(t),\label{orientationdyn}
\end{equation}
where $W^{u}(t)$ is a standard Wiener process, $N(t)$ is a process that ensures that the value of $\phi$ remains in the interval $[0,2\pi]$.
%according to Shorohod reflection principle \cite{shorohod}. 
The spherical elevation $\rho$ is kept fixed. Thus, the transmit gain $G^{TX}$ from a MU to its serving BS is 
%\begin{equation}
%  X(m, n) = \left.
%  \begin{cases}
%    x(n), & \text{for } 0 \leq n \leq 1 \\
%    x(n - 1), & \text{for } 0 \leq n \leq 1 \\
%    x(n - 1), & \text{for } 0 \leq n \leq 1
%  \end{cases}
%  \right\} = xy
%\end{equation}
\vspace{-0.2 cm}

\small

\begin{equation}
  G^{\textrm{TX}}(t)=
  \begin{cases}
    G_m, & \text{if } -\frac{W}{2} \leq \phi(t)-\psi_{ub} \leq \frac{W}{2}, \\
   g_m, & \text{otherwise, }  
  \end{cases} \label{GT}
\end{equation}
\normalsize
where $\psi_{ub}$ is the azimuthal angle between the positive x-axis and the direction in which user views BS it is associated to in the horizon plane.
The receive gain at each BS from each MU is 
\vspace{-0.3 cm}
\begin{equation}
  G^{\textrm{RX}}=
  \begin{cases}
    G_B, & \text{if }  -\frac{w}{2} \leq \beta-\psi_{bu} \leq \frac{w}{2},\\
   g_B, & \text{otherwise. } 
  \end{cases}\label{GR}
\end{equation}
%The data generated by each user evolves according to the following equation
%Let $f_i(\phi_i,t)$ be the probability density function of $\phi_i$ at time $t$. According to \cite{ornsteinref}, $f_i(\phi_i,t)$ is given by
%\small
%\begin{eqnarray}
%f_i(\phi_i,t)&=&\sum_{n=-\infty}^{+\infty}\frac{1}{\sigma_{\phi}\sqrt{2\pi t}}\exp{\frac{2\mu_{\phi}(-2\pi)}{\sigma^2_{\phi}}}\times \exp{-\frac{(\phi_i+4n\pi-\phi_{i,0})^2}{2\sigma^2_{\phi}t}}\nonumber\\
%&&+\sum_{n=-\infty}^{+\infty}\frac{1}{\sigma_{\phi}\sqrt{2\pi t}}\exp{-\frac{2\mu(2n\pi+\phi_{i,0})}{\sigma^2_{\phi}}} \times \exp{-\frac{(4n\pi+\phi_{i,0}+\phi-\mu_{\phi_i}t)^2}{2\sigma^2_{\phi}t}}\nonumber\\
%&&-\frac{2\mu_{\phi}}{\sigma^2_{\phi}}\sum_{n=0}^{+\infty}\exp\frac{2\mu_{\phi}(2n\pi+\phi_i)}{\sigma^2_{\phi}} \times \Big(1-\Phi\Big(\frac{\mu_{\phi}+4n\pi+\phi_{i,0}+\phi_i}{\sigma_{\phi}\sqrt{t}}\Big)\Big)\nonumber\\
%&&+\frac{2\mu_{\phi}}{\sigma^2_{\phi}}\sum_{n=0}^{+\infty}\exp\frac{2\mu_{\phi}(-4n\pi+\phi_i)}{\sigma^2_{\phi}} \times\Phi\Big(\frac{\mu_{\phi}t-4\pi(n+1)+\phi_{i,0}+\phi_i}{\sigma_{\phi} \sqrt{t}}\Big),
%\end{eqnarray}
%\normalsize
%where $\phi_{i,0}$ is user $i$'s orientation at time $t=0$. Let $\lambda_{ij}(t)=Pr[\psi_{ij}-\frac{W}{2}\leq \phi_i(t)\leq \psi_{ij}+\frac{W}{2}]=\int_{\psi_{ij}-\frac{W}{2}}^{ \psi_{ij}+\frac{W}{2}}f_i(\phi,t)d\phi$. 

%\psi_{ij}-\frac{W}{2}\leq \phi_i(t)\leq \psi_{ij}+\frac{W}{2}

Then, from (\ref{GT}) and (\ref{GR}), and since the orientations of the base stations and the mobile users are independent, the total antenna gain $D(t)$ at time $t$ from each user to each BS is a random variable belonging to the set $\mathcal{G}=\{G_B G_m,G_B g_m,g_BG_m, g_Bg_m\}$ whose probability distribution is given by
\vspace{-0.3 cm}

\small 
\begin{equation}
  D(t)=
  \begin{cases}
    G_B G_m, & \text{with probability  } F(t,\psi_{ub})H(t,\psi_{bu}),\\
    G_B g_m, & \text{with probability } (1-F(t,\psi_{ub}))H(t,\psi_{bu}), \\
    g_BG_m, & \text{with probability}F(t,\psi_{ub})(1-H(t,\psi_{bu})),\\
    g_Bg_m, & \text{with probability }(1-F(t,\psi_{ub}))(1- H(t,\psi_{bu})),
 \end{cases}
\end{equation}
where $F(t,\psi_{ub})=Pr(\phi(t)-\frac{W}{2}\leq \psi_{ub} \leq \phi(t)+\frac{W}{2})$ and $H(t,\psi_{bu})=Pr(\phi(t)-\frac{w}{2}\leq \psi_{ub} \leq \phi(t)+\frac{w}{2}).$

\small
\newcounter{tempEquationCounter1} 
\newcounter{thisEquationNumber1}
\newenvironment{floatEq1}
{\setcounter{thisEquationNumber1}{\value{equation}}\addtocounter{equation}{1}% record equation as happened and remember number
\begin{figure*}[!t]% float following equation across columns
\normalsize\setcounter{tempEquationCounter1}{\value{equation}}% record current equation number in floated location
\setcounter{equation}{\value{thisEquationNumber1}}% use previous equation number
}
{\setcounter{equation}{\value{tempEquationCounter1}}% set back to equation number in floated location
\hrulefill\vspace*{4pt}% add a horizontal rule separator
\end{figure*}% end float environment
}
\begin{floatEq1}
\vspace{-0.3 cm}
\begin{equation}
f_L(l,r)=
  \begin{cases}
   \frac{2\pi l \lambda_b (1-p_B(l))e^{-2\lambda_b \pi (\int_{0}^{l}v(1-p_B(v))dv)}}{B_L}, & \hspace{-5 cm}\text{if }  0 \leq r \leq R_{\max} - r_0 \hspace{0.1 cm} \text{or} \hspace{0.1 cm} 0 \leq l \leq R_{\max}-r ,\\
 \frac{ \lambda_bC(l,r)(1-p_B(l))e^{-2\lambda_b \pi (\int_{0}^{R_{\max}-r}v(1-p_B(v))dv)+ \int_{R_{\max}-r}^{l} C(v,r)(1-p_B(v))dv)}}{B_L}, & \text{otherwise. } 
  \end{cases}\label{flos}
\end{equation}
\end{floatEq1}
\normalsize
\vspace{-0.4 cm}
\subsection{User Association}As done in \cite{mmwavecoverage}, each user is associated with the nearest BS, within its communication range $r_0$, such that the gain $AD(t)$ is greater than a required threshold $\eta$ i.e. $AD(t)\geq \eta$, where $A=A_L$ in case of LOS and $A=A_N$, otherwise. Due to the randomness in the BS locations, users' orientations, and the channel between each user and BS, the distance $l$ to the serving BS is a random variable.  In \cite{mmwavecoverage}, the probability density function of the distance $l$ is derived by first finding the probability density functions of the distances to the nearest LOS and NLOS BSs, respectively. However, the network size is assumed to be infinite. In a finite size network, the probability distribution of the distances to the nearest LOS and NLOS BSs depends on the location of the MU, which we obtain in the following proposition.
\vspace{-0.2 cm}
\begin{prop}
Let $r$ be the distance of the MU to the origin, the conditional probability that the distance to the nearest LOS BS is $l$ given that the MU is at distance $r$ from the origin is given by (\ref{flos})
%\begin{equation}
%f_L(l,r)=
%  \begin{cases}
%   \frac{2\pi l \lambda_b (1-p_B(l))e^{-2\lambda_b \pi (\int_{0}^{l}v(1-p_B(v))dv)}}{B_L}, & \hspace{-5 cm}\text{if }  0 \leq r \leq R_{\max} - r_0 \hspace{0.1 cm} \text{or} \hspace{0.1 cm} 0 \leq l \leq R_{\max}-r ,\\
% \frac{ \lambda_bC(l,r)(1-p_B(l))e^{-2\lambda_b \pi (\int_{0}^{R_{\max}-r}v(1-p_B(v))dv)+ \int_{R_{\max}-r}^{l} C(v,r)(1-p_B(v))dv)}}{B_L}, & \text{otherwise, } 
%  \end{cases}
%\end{equation}
where 

\vspace{-0.3 cm}
 \small\begin{eqnarray}
&&C(l,r)=\theta(l,r) l, \theta(l,r)=\tan^{-1}\Big(\frac{m_1-m_2}{1-m_1m_2}\Big),  \label{Dlr}
\end{eqnarray}
\normalsize
\vspace{-0.4 cm}
%\begin{eqnarray}
%&&\theta(l,r)=|\tan^{-1}(m_1)-\tan^{-1}(m_2)|,
%\end{eqnarray}
\begin{eqnarray}
&&m_1(l,r)=\frac{y_1-r}{x_1}, \hspace{0.2 cm} m_2(l,r)=\frac{y_2-r}{x_2}, \label{slopes}
\end{eqnarray}
%\begin{eqnarray}
% &&m_2(l,r)=\frac{y_2-r}{x_2},
%\end{eqnarray}
\vspace{-0.2 cm}
\small
\begin{eqnarray}
&&\hspace{-1 cm}x_{1,2}(l,r)= \pm \frac{2\delta(l,r)}{r}, \hspace{0.2 cm}y_{1,2}(l,r)= \pm \frac{r}{2}+\frac{R^2_{\max}-l^2}{2r}, \label{coor}
\end{eqnarray}
\normalsize
%\begin{eqnarray}
%&&y_{1,2}(l,r)= \pm \frac{r}{2}+\frac{R^2_{\max}-l^2}{2r},
%\end{eqnarray}
\scriptsize
\begin{eqnarray}
&&\hspace{-0.8 cm}\delta(l,r)=\frac{\sqrt{(r+R_{\max}+l)(r+R_{\max}-l)(r-R_{\max}+l)(-r+R_{\max}+l)}}{4},\nonumber
\end{eqnarray}
\normalsize

\newcounter{tempEquationCounter2} 
\newcounter{thisEquationNumber2}
\newenvironment{floatEq2}
{\setcounter{thisEquationNumber2}{\value{equation}}\addtocounter{equation}{1}% record equation as happened and remember number
\begin{figure*}[!t]% float following equation across columns
\normalsize\setcounter{tempEquationCounter2}{\value{equation}}% record current equation number in floated location
\setcounter{equation}{\value{thisEquationNumber2}}% use previous equation number
}
{\setcounter{equation}{\value{tempEquationCounter2}}% set back to equation number in floated location
\hrulefill\vspace*{4pt}% add a horizontal rule separator
\end{figure*}% end float environment

}
\begin{floatEq2}
\vspace{-0.5 cm}
\begin{equation}
B_L=
\begin{cases}
1-e^{-2\lambda_b \pi (\int_{0}^{r_0}v(1-p_B(v))dv)}&\text{if }  0 \leq r \leq R_{\max} - r_0 \hspace{0.1 cm} \text{or} \hspace{0.1 cm} 0 \leq l \leq R_{\max}-r ,\\
1-e^{-2\lambda_b \pi (\int_{0}^{R_{\max}-r}v(1-p_B(v))dv)+ \int_{R_{\max}-r}^{r_0} C(v,r)(1-p_B(v))dv)},& \text{otherwise. } 
\end{cases}\label{BL}
\end{equation}
\end{floatEq2}
$B_L$ is the probability that a user has at least one LOS BS and is given by (\ref{BL}).
\end{prop}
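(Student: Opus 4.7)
The plan is to apply the standard void-probability argument for a thinned inhomogeneous Poisson point process, supplemented by an explicit geometric computation of the intersection of the ball $\boldsymbol{b}(\boldsymbol{u}, l)$ centered at the MU position $\boldsymbol{u}$ with the network disk $\mathcal{A}=\boldsymbol{b}(\boldsymbol{o}, R_{\max})$. Conditional on $\|\boldsymbol{u}\|=r$, the LOS BSs form a thinned PPP whose intensity at a point of $\mathcal{A}$ at distance $v$ from the MU equals $\lambda_b(1-p_B(v))$. The event that the nearest LOS BS lies beyond distance $l$ is the event that this thinned process has no point in $\boldsymbol{b}(\boldsymbol{u}, l)\cap\mathcal{A}$, whose probability by the void formula equals $\exp(-\Lambda(l, r))$, with $\Lambda(l, r)$ the intensity measure of the thinned process over that region. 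The unconditional density of the nearest LOS distance is then $\Lambda'(l,r)\exp(-\Lambda(l, r))$; conditioning on the existence of at least one LOS BS within the range $r_0$ divides by $B_L=1-\exp(-\Lambda(r_0, r))$, which by the same calculation immediately produces (\ref{BL}).

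It then remains to evaluate $\Lambda(l, r)$. If $0\le r\le R_{\max}-r_0$ or, more generally, $0\le l\le R_{\max}-r$, the ball $\boldsymbol{b}(\boldsymbol{u}, l)$ lies entirely in $\mathcal{A}$ and polar coordinates around the MU yield $\Lambda(l, r)=2\pi\lambda_b\int_0^l v(1-p_B(v))\,dv$, reproducing the upper branch of (\ref{flos}) after differentiation. Otherwise I would split the radial integral at $v=R_{\max}-r$: below the split, the full circle $\partial\boldsymbol{b}(\boldsymbol{u}, v)$ lies in $\mathcal{A}$ and contributes $2\pi v$; above the split, only the arc of $\partial\boldsymbol{b}(\boldsymbol{u}, v)$ that remains inside $\boldsymbol{b}(\boldsymbol{o}, R_{\max})$ contributes, with arc length $C(v, r)$. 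Differentiating the resulting exponent in $l$ gives the lower-branch prefactor $\lambda_b C(l, r)(1-p_B(l))$, and the same split at $l=r_0$ yields the lower branch of (\ref{BL}).

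The main technical obstacle is deriving the geometric identity $C(l, r)=\theta(l, r)\,l$. I would place the MU on the $y$-axis at $\boldsymbol{u}=(0, r)$ and intersect the two circles $\partial\boldsymbol{b}(\boldsymbol{u}, l)$ and $\partial\boldsymbol{b}(\boldsymbol{o}, R_{\max})$; their two intersection points are the $(x_{1,2}, y_{1,2})$ of (\ref{coor}), with $\delta(l, r)$ identified as the area of the triangle with vertices origin, MU, and an intersection point, as confirmed by Heron's formula applied to side lengths $r$, $R_{\max}$, $l$ (the half-base $|x_{1,2}|=2\delta/r$ is then the height from that intersection to the line $\overline{\boldsymbol{o}\boldsymbol{u}}$). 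The two line segments joining the MU to these intersections have slopes $m_1, m_2$ as in (\ref{slopes}), and the tangent-subtraction identity then gives $\theta(l,r)$ as the central angle they subtend at the MU, so that $\theta(l,r)\,l$ is the corresponding arc length. Care is needed in selecting the correct branch of $\tan^{-1}$, since the arc interior to $\mathcal{A}$ is the minor or major one depending on the relative sizes of $l$ and $R_{\max}$; continuity at $l=R_{\max}-r$, where the two circles become internally tangent and $\theta\to 2\pi$, then matches the two branches of (\ref{flos}) and (\ref{BL}).
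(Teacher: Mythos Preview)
Your proposal is correct and follows essentially the same route as the paper's proof: both place the MU at $(0,r)$, compute the intersection points of $\partial\boldsymbol{b}(\boldsymbol{u},l)$ with $\partial\boldsymbol{b}(\boldsymbol{o},R_{\max})$, extract the slopes $m_1,m_2$, and obtain the subtended angle via the tangent-subtraction formula to get the arc length $C(l,r)$, then feed this into the void-probability argument for the LOS-thinned PPP. Your write-up is in fact more complete than the paper's sketch, making explicit the void-probability framework, the Heron-formula interpretation of $\delta(l,r)$, and the branch-selection caveat for $\tan^{-1}$.
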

\begin{proof}
Let $r$ be the distance from a MU to the origin. If $R_{max}-r_0 \leq r \leq R_{\max}$,
%the biggest circle that can be inscribed inside the region $\mathcal{A}$ is of radius $R_{\max}-r$.
 the probability of the existence of a BS at distance $R_{\max}-r \leq l \leq r_0$ from the MU is given by
$p_{ex}(l)=\frac{C(l,r)}{\pi R^2_{\max}}$, where $C(l,r)$ is the length of the arc of the circle  centered at the MU and of radius $l$ contained in area $\mathcal{A}$. In order to determine $C(l,r)$, we first determine the intersection points of circle and area $\mathcal{A}$. We assume without loss of generality that the coordinates of the MU are given by $(0,r)$. Then, the coordinates of the two intersection points $(x_1,y_1)$ and $(x_2,y_2)$ are given by (\ref{coor}). Next, we determine the slopes of the lines connecting the MU with the intersection points, respectively. The equations can be derived as in (\ref{slopes}).
The equation of the angle between the two lines is obtained as in (\ref{Dlr})
 where
%$m_1(l,r)=\max(\frac{y_1-r}{x_1},\frac{y_2-r}{x_2})$ and $m_2(l,r)=\min(\frac{y_1-r}{x_1},\frac{y_2-r}{x_2}).$  
$m_1(l,r)$ and $m_2(l,r)$ are given by (\ref{slopes}).
Then, $C(l,r)=\theta(l,r) l$. Thus, the conditional probability that the distance to the nearest LOS BS is $l$ given that the MU is at distance $r$ from the origin can be derived according to (\ref{flos}).
%\begin{equation}
%f_L(l,r)=\frac{\lambda_bD(l,r)e^{-2\lambda_b \pi (\int_{0}^{R_{\max}-r}v(1-p_B(v))dv)+ \int_{R_{\max}-r}^{r_0} D(v,r)(1-p_B(v))dv)}}{B_L}.
%\end{equation}
%When $0 \leq r \leq R_{\max}-r_0$, the conditional probability $f_L(l,r)$ can be derived as
%\begin{equation}
%f_L(l,r)=\frac{2\pi l \lambda_be^{-2\lambda_b \pi (\int_{0}^{r_0}v(1-p_B(v))dv)}}{B_L}
%\end{equation}
%where $B_L$ is the probability that a user has at least one LOS BS and is given by (\ref{BL}).
\end{proof}

\vspace{-0.2 cm}
\begin{prop}
\newcounter{tempEquationCounter} 
\newcounter{thisEquationNumber}
\newenvironment{floatEq}
{\setcounter{thisEquationNumber}{\value{equation}}\addtocounter{equation}{1}% record equation as happened and remember number
\begin{figure*}[!t]% float following equation across columns
\normalsize\setcounter{tempEquationCounter}{\value{equation}}% record current equation number in floated location
\setcounter{equation}{\value{thisEquationNumber}}% use previous equation number
}
{\setcounter{equation}{\value{tempEquationCounter}}% set back to equation number in floated location
\hrulefill\vspace*{4pt}% add a horizontal rule separator
\end{figure*}% end float environment

}
\begin{floatEq}
\vspace{-0.3 cm}
\begin{equation}
f_N(l,r)=
  \begin{cases}
   \frac{2\pi l \lambda_b p_B(l)e^{-2\lambda_b \pi (\int_{0}^{l}vp_B(v)dv)}}{B_N}, & \hspace{-4.5 cm}\text{if }  0 \leq r \leq R_{\max} - r_0,\text{or} \hspace{0.1 cm} 0 \leq l \leq R_{\max}-r ,\\
 \frac{ \lambda_bC(l,r)p_B(l)e^{-2\lambda_b \pi (\int_{0}^{R_{\max}-r}vp_B(v)dv)+ \int_{R_{\max}-r}^{l} C(v,r)p_B(v)dv)}}{B_N}, & \text{otherwise. } 
  \end{cases}\label{fnlos}
\end{equation}
\hspace{-0.5 cm}
\end{floatEq}
\newcounter{tempEquationCounter3} 
\newcounter{thisEquationNumber3}
\newenvironment{floatEq3}
{\setcounter{thisEquationNumber3}{\value{equation}}\addtocounter{equation}{1}% record equation as happened and remember number
\begin{figure*}[!t]% float following equation across columns
\normalsize\setcounter{tempEquationCounter3}{\value{equation}}% record current equation number in floated location
\setcounter{equation}{\value{thisEquationNumber3}}% use previous equation number
}
{\setcounter{equation}{\value{tempEquationCounter3}}% set back to equation number in floated location
\hrulefill\vspace*{4pt}% add a horizontal rule separator
\end{figure*}% end float environment

}
\begin{floatEq3}
\vspace{-0.4 cm}
\begin{equation}
B_N=
\begin{cases}
1-e^{-2\lambda_b \pi (\int_{0}^{r_0}vp_B(v))dv}&\text{if }  0 \leq r \leq R_{\max} - r_0,\text{or} \hspace{0.1 cm} 0 \leq l \leq R_{\max}-r ,\\\
1-e^{-2\lambda_b \pi (\int_{0}^{R_{\max}-r}vp_B(v)dv)+ \int_{R_{\max}-r}^{r_0} C(v,r)p_B(v)dv)}& \text{otherwise. } 
\end{cases}\label{BN}
\end{equation}
\vspace{-0.5 cm}
\end{floatEq3}
Let $r$ be the distance of the MU to the origin, the conditional probability that the distance to the nearest NLOS BS is $l$ given that the MU is at distance $r$ from the origin is given by (\ref{fnlos}).
%\begin{equation}
%f_N(l,r)=
%  \begin{cases}
%   \frac{2\pi l \lambda_b p_B(l)e^{-2\lambda_b \pi (\int_{0}^{r_0}vp_B(v)dv)}}{B_N}, & \hspace{-4.5 cm}\text{if }  0 \leq r \leq R_{\max} - r_0,\\
% \frac{ \lambda_bC(l,r)p_B(l)e^{-2\lambda_b \pi (\int_{0}^{R_{\max}-r}vp_B(v)dv)+ \int_{R_{\max}-r}^{r_0} C(v,r)p_B(v)dv)}}{B_N}, & \text{otherwise, } 
%  \end{cases}
%\end{equation}
where $C(l,r)$ is given by (\ref{Dlr}) and $B_N$ is the probability that a user has at least one NLOS BS and is given by (\ref{BN}).

\end{prop}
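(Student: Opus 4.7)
The plan is to mirror the proof of Proposition 1, replacing the LOS thinning probability $1-p_B(l)$ by its complement $p_B(l)$ throughout. By the independent thinning property applied to the BS PPP, the NLOS BSs seen by an MU form a (non-homogeneous) Poisson process whose intensity at distance $l$ from the MU is proportional to $\lambda_b p_B(l)$. The nearest-NLOS distance problem then reduces to computing the nearest-point distribution of this thinned process restricted to the disc $\boldsymbol{b}(\boldsymbol{o},R_{\max})$, conditioned on the process being nonempty inside the communication range $r_0$.

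First I would split into the same two geometric cases as in Proposition 1, distinguished by whether the circle of radius $l$ centered at the MU is entirely contained in $\mathcal{A}$. When $r \leq R_{\max}-r_0$, or more generally when $l \leq R_{\max}-r$, the full circumference $2\pi l$ contributes, and the standard void-probability argument in polar coordinates around the MU gives the unconditioned NLOS-distance density $2\pi l\lambda_b p_B(l)\exp\bigl(-2\pi\lambda_b\int_0^l v\, p_B(v)\,dv\bigr)$, which is the first branch of (\ref{fnlos}). When instead $R_{\max}-r < l \leq r_0$, only the arc of that circle lying inside $\mathcal{A}$ contributes; this arc has length $C(l,r)=\theta(l,r)\,l$ by the very construction carried out in Proposition 1, since $C(l,r)$ depends solely on the intersection of the circle with the boundary of $\mathcal{A}$ and is therefore insensitive to whether one is counting LOS or NLOS BSs. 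Substituting $C(l,r)$ in place of $2\pi l$ in the intensity, and correspondingly splitting the exponent of the void probability into an integral from $0$ to $R_{\max}-r$ (full circles) plus an integral from $R_{\max}-r$ to $l$ (partial arcs weighted by $C(v,r)$), yields the second branch of (\ref{fnlos}).

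Finally, I would condition on the event that at least one NLOS BS lies within the communication range $r_0$. By the void probability of the thinned NLOS PPP over the appropriate region (full disc or its intersection with $\mathcal{A}$ near the boundary), this event has probability $B_N$ given in (\ref{BN}), which provides the required normalization. The only nontrivial geometric step, namely the computation of the arc length $C(l,r)$ together with the intersection points and slopes in (\ref{Dlr})--(\ref{coor}), has already been carried out inside the proof of Proposition 1, so no new obstacle arises here: the proof of Proposition 2 is a direct transcription of the proof of Proposition 1 with the thinning weight $1-p_B(\cdot)$ replaced by $p_B(\cdot)$ and $B_L$ replaced by $B_N$.
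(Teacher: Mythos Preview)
Your proposal is correct and matches the paper's own approach: the paper simply states that the proof follows similar steps as the proof of Proposition~1 and omits the details due to space limitations. Your write-up makes explicit exactly the substitution $1-p_B(\cdot)\mapsto p_B(\cdot)$ and $B_L\mapsto B_N$ that the paper intends, so there is nothing to add or correct.
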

\vspace{-0.5 cm}

\begin{proof}
The proof follows similar steps as the proof of Proposition 1, and is omitted due to space limitations.
\end{proof}
\normalsize
\vspace{-0.3 cm}
 Let $D_L(t)$ and $d_L$ be the antenna gain, channel coefficient, and distance to the nearest LOS BS. Let $D_N(t)$ and $d_N$ be the antenna gain, channel coefficient and distance to the nearest NLOS BS. Thus, at time $t$, the probability that a user connects to LOS BS given that it is at distance $r$ from the origin is

\vspace{-0.4 cm}
\small
\begin{eqnarray}
&&\hspace{-0.5 cm}\rho_L=B_L\int_{l=0}^{r_0} f_L(l,r)\mathbb{P}(A_L D_L \geq \eta)(\mathbb{P}(d_N \geq l)+\mathbb{P}(A_N D_N\leq \eta))dl\nonumber\\
%&&\hspace{-0.5 cm}=B_L\int_{l=0}^{r_0} f_L(l,r)\mathbb{P}(D_L \geq \frac{\eta}{A_L})(p_N(l,r)+\mathbb{P}(D_N \leq \frac{\eta}{A_N})))\nonumber\\
&&\hspace{-0.5 cm}= B_L\int_{l=0}^{r_0} f_L(l,r) \sum_{g \in \mathcal{G}_L}\mathbb{P}(D_L=g)(p_N(l,r)+\sum_{g \in \mathcal{G}_N}\mathbb{P}(D_N=g))dl\nonumber
%&=&\int_{x=0}^\infty \sum_{g\in \mathcal{G}}\Big(\int_{h_L=-\infty}^{-Q(g)}\hspace{-0.5 cm}f(h_L,t)dh_L+\int_{h_L=Q(g)}^{\infty}\hspace{-0.5 cm}f(h_L,t)dh_L\Big)\Big(p_N(x)+\int_{-Q(g)}^{Q(g)}f(h_N,t)dh_N\Big)
\end{eqnarray}
\normalsize
where 
$\mathcal{G}_L=\{g \in \mathcal{G} \hspace{0.1 cm} \text{s.t.} \hspace{0.1 cm} g \geq \frac{\eta}{A_L} \}$, $\mathcal{G}_N=\{g \in \mathcal{G} \hspace{0.1 cm} \text{s.t.} \hspace{0.1 cm} g \leq \frac{\eta}{A_N} \}$
%$q_L(x)=(\frac{A_L D_L |h_L|^2}{A_N D_N |h_N|^2})^{\alpha_N}x^{\frac{\alpha_N}{\alpha_L}}$ 
and $\mathbb{P}(d_N\geq l)=p_N(l,r)$, where $p_N(l,r)$ is the probability that no NLOS BS is present at distance less than or equal to $l$ given that the MU is at distance $r$ from the origin and given by (\ref{pN}). Hence, the probability that a user connects to NLOS is: $\rho_N=1-\rho_L$. Then, when the MU is at a distance $r$ from the origin, the conditonal probability distributions  $f^c_L(l,r) $ and $f^c_N(l,r) $ of the distance to the BS given that user $j$ connects to LOS and NLOS BS, respectively are

\vspace{-0.4 cm}
\small
\begin{eqnarray}
&&\hspace{-1 cm}f^c_L(l,r)=f_L(l,r)\mathbb{P}(A_LD_L \geq \eta)(\mathbb{P}(d_N \geq l)+\mathbb{P}(A_ND_N\leq \eta))\nonumber\\
&&\hspace{-0.5 cm}=f_L(l,r) \sum_{g \in \mathcal{G}_L}\mathbb{P}(D_L=g)(p_N(l,r)+\sum_{g \in \mathcal{G}_N}\mathbb{P}(D_N=g)), \hspace{0.1 cm}
\end{eqnarray}
\begin{eqnarray}
&&\hspace{-1 cm}f^c_N(l,r)=f_N(l,r)\mathbb{P}(A_ND_N \geq \eta)(\mathbb{P}(d_L \geq l)+\mathbb{P}(A_LD_L\leq \eta))\nonumber
\end{eqnarray}
\begin{eqnarray}
&&\hspace{-0.5 cm}=f_N(l,r) \sum_{g \in \mathcal{G}'_N}\mathbb{P}(D_N=g)(p_L(l,r)+\sum_{g \in \mathcal{G}'_L}\mathbb{P}(D_L=g)), \hspace{0.1 cm}
\end{eqnarray}
\normalsize
where 
$p_L(l,r)$ is given by (\ref{pL}) and
$\mathcal{G}'_L=\{g \in \mathcal{G} \hspace{0.1 cm} \text{s.t.} \hspace{0.1 cm} g \leq \frac{\eta}{A_L} \}$, $\mathcal{G}'_N=\{g \in \mathcal{G} \hspace{0.1 cm} \text{s.t.} \hspace{0.1 cm} g \geq \frac{\eta}{A_N} \}.$ 
\newcounter{tempEquationCounter4} 
\newcounter{thisEquationNumber4}
\newenvironment{floatEq4}
{\setcounter{thisEquationNumber4}{\value{equation}}\addtocounter{equation}{1}% record equation as happened and remember number
\begin{figure*}[!t]% float following equation across columns
\normalsize\setcounter{tempEquationCounter4}{\value{equation}}% record current equation number in floated location
\setcounter{equation}{\value{thisEquationNumber4}}% use previous equation number
}
{\setcounter{equation}{\value{tempEquationCounter4}}% set back to equation number in floated location
\hrulefill\vspace*{4pt}% add a horizontal rule separator
\end{figure*}% end float environment

}
\begin{floatEq4}
\vspace{-0.3 cm}
\begin{equation}
p_N(l,r)=
\begin{cases}
e^{-2\lambda_b \pi (\int_{0}^{l}vp_B(v))dv}&\hspace{-3 cm}\text{if }  0 \leq r \leq R_{\max} - r_0 \hspace{0.1 cm} \text{or} \hspace{0.1 cm}  r \geq R_{\max}-r_0 \hspace{0.1 cm} \text{and} \hspace{0.1 cm} l \leq R_{\max}-r,\\
e^{-2\lambda_b \pi (\int_{0}^{R_{\max}-r}vp_B(v)dv)+ \int_{R_{\max}-r}^{l} C(v,r)p_B(v)dv)}& \text{otherwise. } 
\end{cases}\label{pN}
\vspace{-0.2 cm}
\end{equation}
\end{floatEq4}

\newcounter{tempEquationCounter5} 
\newcounter{thisEquationNumber5}
\newenvironment{floatEq5}
{\setcounter{thisEquationNumber5}{\value{equation}}\addtocounter{equation}{1}% record equation as happened and remember number
\begin{figure*}[!t]% float following equation across columns
\normalsize\setcounter{tempEquationCounter5}{\value{equation}}% record current equation number in floated location
\setcounter{equation}{\value{thisEquationNumber5}}% use previous equation number
}
{\setcounter{equation}{\value{tempEquationCounter5}}% set back to equation number in floated location
\hrulefill\vspace*{4pt}% add a horizontal rule separator
\end{figure*}% end float environment

}
\begin{floatEq5}
\vspace{-0.3 cm}
\begin{equation}
p_L(l,r)=
\begin{cases}
e^{-2\lambda_b \pi (\int_{0}^{l}v(1-p_B(v))dv}&\hspace{-5 cm}\text{if }  0 \leq r \leq R_{\max} - r_0 \hspace{0.1 cm} \text{or} \hspace{0.1 cm}  r \geq R_{\max}-r_0 \hspace{0.1 cm} \text{and} \hspace{0.1 cm} l \leq R_{\max}-r,\\
e^{-2\lambda_b \pi (\int_{0}^{R_{\max}-r}v(1-p_B(v))dv)+ \int_{R_{\max}-r}^{l} C(v,r)(1-p_B(v))dv)}& \text{otherwise. } 
\end{cases}\label{pL}
\end{equation}
\vspace{-0.6 cm}
\end{floatEq5}

%\begin{eqnarray}
%&&\hspace{-0.5 cm}f^c_L(x)=p_N(x)\sum_{g_L\in \mathcal{G}}\sum_{g_N \in \mathcal{G}}\int_{h_L=-\infty}^{\infty}\hspace{-0.3 cm}\Big(\int_{h_N=-\infty}^{-Q_L}\hspace{-0.5 cm}f(h_N,t)dh_N+\int_{h_N=QL}^{\infty}\hspace{-0.5 cm}f(h_N,t)dh_N\Big)f(h_L,t)\mathbb{P}(D_L=g_L)\mathbb{P}(D_N=g_N)dh_L \nonumber\\
%\end{eqnarray}
%\begin{eqnarray}
%\hspace{-1 cm}f^c_N(x)&=&f_N(x)\mathbb{P}(|h_N|^2D_N\geq \eta)(\mathbb{P}(d_L \geq x)+\mathbb{P}(|h_L|^2D_L\leq \eta))\nonumber\\
%&=&\sum_{g\in \mathcal{G}}\Big(\int_{h_N=-\infty}^{-Q(g)}\hspace{-0.5 cm}f(h_N,t)dh_N+\int_{h_N=Q(g)}^{\infty}\hspace{-0.5 cm}f(h_N,t)dh_N\Big)\Big(p_L(x)+\int_{-Q(g)}^{Q(g)}f(h_L,t)dh_L\Big)
%\end{eqnarray}
%\begin{eqnarray}
%\hspace{-0.5 cm}&&f^c_N(x)=\sum_{g_L\in \mathcal{G}}\sum_{g_N \in \mathcal{G}}\int_{h_L=-\infty}^{\infty}\int_{h_N=-\infty}^{\infty}\mathbb{P}\Big(d_L\geq q_N(x)) f_L(x)f(h_L,t)f(h_N,t)\mathbb{P}(D_L=g_L)\mathbb{P}(D_N=g_N)dh_N dh_L\nonumber\\
%\hspace{-0.5 cm}&&=\sum_{g_L\in \mathcal{G}}\sum_{g_N \in \mathcal{G}}\int_{h_L=-\infty}^{\infty}\int_{h_N=-\infty}^{\infty}\mathbb{P}\Big(d_L\geq q_N(x)) f_N(x)f(h_L,t)f(h_N,t)\mathbb{P}(D_L=g_L)\mathbb{P}(D_N=g_N)dh_L dh_N \nonumber\\
%\hspace{-0.5 cm}&&=\sum_{g_L\in \mathcal{G}}\sum_{g_N \in \mathcal{G}}\int_{h_L=-\infty}^{\infty}\int_{h_N=-\infty}^{\infty} p_L(q_N(x)) f_N(x)f(h_L,t)f(h_N,t)\mathbb{P}(D_L=g_L)\mathbb{P}(D_N=g_N)dh_L dh_N \hspace{-0.5 cm}
%\end{eqnarray}
\normalsize
\vspace{-0.3 cm}
\subsection{Utility Function}The received SINR at the associated BS at distance $l$ from the MU's transmission is given by
$\gamma_k(P(t),l)=\frac{A_kl^{-\alpha_k}D(t)P(t)}{N_0B + \bar{M}(t,l)}$
%\vspace{-0.3 cm}
%\small
%\begin{equation}
%\gamma_k(P(t),l)=\frac{A_kl^{-\alpha_k}D(t)P(t)}{N_0B + \bar{M}(t,l)}, \label{rate}
%\end{equation}
%\normalsize
where $k=L$ in case of LOS and $N$, otherwise, $D(t)$ is the antenna gain at time $t$, $\bar{M}(t,l)$ is the aggregate interference at the associated BS, $P(t)$ is the transmission power and $N_0$ is the power spectral density of the additive white Gaussian noise. Since MUs are constrained by their battery, each MU seeks to maximize the energy efficiency function $\xi_k(P(t),l)=\frac{R \cdot q(\gamma_k(P(t),l))}{P(t)}$
%
%\vspace{-0.2 cm}
%\small
%\begin{equation}
%\xi_k(P(t),l)=\frac{R \cdot q(\gamma_k(P(t),l))}{P(t)},\label{EE}
%\end{equation}
%\normalsize
where $R$ is the transmitter's rate, $q(.)$ is the packet success probability which is a function of the SINR. The average utility from any MU transmission is given by  (\ref{aggrate}). The energy $E$ available at each MU's battery decreases with the transmission power and therefore evolves according to 
\vspace{-0.1 cm}
\begin{equation}
\frac{dE}{dt}=-P(t)dt. \label{Evol}
\end{equation}

\small
\newcounter{tempEquationCounter6} 
\newcounter{thisEquationNumber6}
\newenvironment{floatEq6}
{\setcounter{thisEquationNumber6}{\value{equation}}\addtocounter{equation}{1}% record equation as happened and remember number
\begin{figure*}[!t]% float following equation across columns
\normalsize\setcounter{tempEquationCounter6}{\value{equation}}% record current equation number in floated location
\setcounter{equation}{\value{thisEquationNumber6}}% use previous equation number
}
{\setcounter{equation}{\value{tempEquationCounter6}}% set back to equation number in floated location
\hrulefill\vspace*{4pt}% add a horizontal rule separator
\end{figure*}% end float environment

}
\begin{floatEq6}
\begin{eqnarray}
\bar{U}(P(t))=\int_{r=0}^{R_{\max}} 2\pi r \lambda_u \int_{l=r_B}^{r_0} p_B(l) f^c_N(l,r) \xi_N(P(t),l)+(1-p_B(l))f^c_L(l,r) \xi_L(P(t),l) dl dr. \label{aggrate}
\end{eqnarray}
\vspace{-0.6 cm}
\end{floatEq6}

\normalsize

The objective of each MU is to find the optimal transmission power $P(t) \in [0,P_{\max}]$ for $t  \in [0,T]$ that maximizes its aggregate utility
\vspace{-0.6 cm}

\small
\begin{equation}
J= \int_{t=0}^T \bar{U}(P(t))dt, \label{util}
\end{equation}
\normalsize
 subject to the orientation and energy dynamics constraints  (\ref{orientationdyn}) and (\ref{Evol}). According to  (\ref{orientationdyn}) and (\ref{Evol}), the evolution of the energy and orientation of all users have the same dynamics. Further, all MUs have the same utility function which is given by (\ref{util}) and which depends on the aggregate interference from the remaining MUs according to (\ref{aggrate}). Thus, due to the ultra dense deployment and the homogeneity of the users in terms of the utility and state dynamics, the problem is formulated as a mean-field game as follows.

\vspace{-0.5 cm}

\section{Mean-field Game}

In our problem, the MUs are indistinguishable since they have the same evolution of states as well as the same control and cost functions, as shown in  (\ref{orientationdyn}), (\ref{Evol}), and (\ref{util}). Thus, the MUs satisfy the exchangability property \cite{multiclass} i.e. each MU $k$ needs to only know its state and can implement a homogeneous admissible control $P(t)=\alpha(t,x(t))$. Thus, since the MUs are indistinguishable, we consider the state of a reference MU $\boldsymbol{x}^0(t)=[\phi(t),E(t)]$ that evolves according to (\ref{orientationdyn}) and (\ref{Evol}) with distribution $m(E,\phi,t)$, which is the mean-field. The mean-field $m(\phi,E,t)$ \cite{multiclass} is the solution of the Fokker-Planck-Kolmogorov equation
\begin{eqnarray}
&&\hspace{-2.5 cm}\partial_t m+\mu_\phi \partial_{\phi}m-P\partial_{E} m-\sigma_\phi \partial^2_\phi m=0 \label{FP},
\end{eqnarray}
with initial conditions $m(.,.,.,0)=0$ and reflecting boundary conditions $m(.,.,0,.t)=m(.,.,2\pi,.,t)=0$. In order to derive the aggregate interference on the associated BS for the mean-field case, we use the following properties.
\newcounter{tempEquationCounter7} 
\newcounter{thisEquationNumber7}
\newenvironment{floatEq7}
{\setcounter{thisEquationNumber7}{\value{equation}}\addtocounter{equation}{1}% record equation as happened and remember number
\begin{figure*}[!t]% float following equation across columns
\normalsize\setcounter{tempEquationCounter7}{\value{equation}}% record current equation number in floated location
\setcounter{equation}{\value{thisEquationNumber7}}% use previous equation number
}
{\setcounter{equation}{\value{tempEquationCounter7}}% set back to equation number in floated location
\hrulefill\vspace*{4pt}% add a horizontal rule separator
\end{figure*}% end float environment

}
\begin{floatEq7}
\begin{equation}
p_I(l,r)=
\begin{cases}
2\pi l \lambda_{\mu}dl\hspace{1 cm}&\text{if }  0 \leq r \leq R_{\max} - r_0 \hspace{0.1 cm} \text{or} \hspace{0.1 cm}  r \geq R_{\max}-r_0 \hspace{0.1 cm} \text{and} \hspace{0.1 cm} l \leq R_{\max}-r,\\
D(l,r)\lambda_{\mu}dl&\text{otherwise, } 
\vspace{-0.3 cm}
\end{cases}\label{intd}
\end{equation}
\vspace{-0.5 cm}
\end{floatEq7}
\newcounter{tempEquationCounter8} 
\newcounter{thisEquationNumber8}
\newenvironment{floatEq8}
{\setcounter{thisEquationNumber8}{\value{equation}}\addtocounter{equation}{1}% record equation as happened and remember number
\begin{figure*}[!t]% float following equation across columns
\normalsize\setcounter{tempEquationCounter8}{\value{equation}}% record current equation number in floated location
\setcounter{equation}{\value{thisEquationNumber8}}% use previous equation number
}
{\setcounter{equation}{\value{tempEquationCounter8}}% set back to equation number in floated location
\hrulefill\vspace*{4pt}% add a horizontal rule separator
\end{figure*}% end float environment

}
\begin{floatEq8}
\small
\begin{equation}
Z(r,l,m)=\sum_{g\in\mathcal{G}}\mathbb{P}(D=g)\frac{D}{2\pi}\int_{\theta=0}^{2\pi}\int_{q=rB}^{R_{\max}}p_I(q,d_1(\theta,r,l
))\Big(p_B(q)A_N q^{-\alpha_N}+(1-p_B(q))A_L q^{-\alpha_L}\Big)dq d\theta \label{Zeq}
\end{equation}
\normalsize
\vspace{-0.7 cm}
\end{floatEq8}
\begin{prop}
When the BS is located at a distance $r$ from the origin, the probability
that the  interferer is located in the interval $(l,l+dl)$ is given by (\ref{intd}).
%\begin{equation}
%p_I(l,r)=
%\begin{cases}
%2\pi l \lambda_{\mu}dl\hspace{1 cm}&\text{if }  0 \leq r \leq R_{\max} - r_0 \hspace{0.1 cm} \text{or} \hspace{0.1 cm}  r \geq R_{\max}-r_0 \hspace{0.1 cm} \text{and} \hspace{0.1 cm} l \leq R_{\max}-r,\\
%D(l,r)\lambda_{\mu}dl&\text{otherwise, } 
%\vspace{-0.3 cm}
%\end{cases}\label{intd}
%\end{equation}
\end{prop}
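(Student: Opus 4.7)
My plan is to reduce the claim to a direct application of the PPP intensity measure combined with the geometric arc-length computation already carried out in the proof of Proposition~1. First, I would invoke the restriction property of the PPP: since the MUs form a uniform PPP of density $\lambda_{\mu}$ on $\mathcal{A} = \boldsymbol{b}(\boldsymbol{o}, R_{\max})$, the expected number of interferers lying in any Borel set $S \subset \mathbb{R}^2$ equals $\lambda_{\mu}\,|S \cap \mathcal{A}|$. Fixing a BS at distance $r$ from the origin and denoting its position by $\boldsymbol{x}_{\textrm{BS}}$, the probability that a distinct MU lies in the thin annulus $\{x : l \leq \|x - \boldsymbol{x}_{\textrm{BS}}\| \leq l + dl\}$ is therefore $\lambda_{\mu}$ times the area of that annulus intersected with $\mathcal{A}$.

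Next, I would split into the two geometric regimes appearing in~(\ref{intd}). When either $r \leq R_{\max} - r_{0}$ or $l \leq R_{\max} - r$, the entire circle of radius $l$ centered at $\boldsymbol{x}_{\textrm{BS}}$ lies inside $\mathcal{A}$, so the infinitesimal annulus contributes its full Euclidean area $2\pi l\, dl$; multiplying by $\lambda_{\mu}$ yields the first case of~(\ref{intd}). Otherwise the circle crosses $\partial \boldsymbol{b}(\boldsymbol{o}, R_{\max})$ and only the portion inside $\mathcal{A}$ counts; that portion is an arc of length $D(l,r)$, so the annular area reduces to $D(l,r)\, dl$ and the probability becomes $D(l,r)\,\lambda_{\mu}\, dl$.

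Finally, I would compute $D(l,r)$ by recycling verbatim the chord construction used in the proof of Proposition~1, this time with the BS playing the role previously played by the MU. Placing the BS at $(0,r)$, the circle of radius $l$ around it meets $\partial \boldsymbol{b}(\boldsymbol{o}, R_{\max})$ at the two points whose coordinates are given in~(\ref{coor}); the slopes from the BS to those points are~(\ref{slopes}); and the angle they subtend at the BS is $\theta(l,r) = \tan^{-1}\bigl((m_{1} - m_{2})/(1 - m_{1}m_{2})\bigr)$ as in~(\ref{Dlr}), giving $D(l,r) = \theta(l,r)\,l$. The only real obstacle is keeping the interior/near-boundary case split straight; all of the nontrivial geometry has already been done in Proposition~1, which is presumably why the authors feel comfortable stating the result without a detailed proof.
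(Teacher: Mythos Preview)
Your proposal is correct and matches the paper's own approach: the paper's proof simply notes that the potential interferers lie in the intersection of the disc of radius $r_0$ centered at the BS with $\mathcal{A}$, and then defers entirely to the arc-length computation of Proposition~1 with the BS in place of the MU. You have merely made explicit the PPP intensity-measure step and the interior/boundary case split that the paper leaves implicit.
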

\vspace{-0.4 cm}
\begin{proof}
Due to the limited communication range of the MUs, the potential interferers are located in the intersection area of the circle centered at the BS with radius $r_0$ and area $\mathcal{A}$. Then, the derivation of the expression in (\ref{intd}) follows a similar argument as in the proof of Proposition 1.
\end{proof}
\vspace{-0.3 cm}
\begin{remark}
When the MU is located at distance $r$ from the origin and the BS at distance $l$ from the MU, the possible distances of the BS from the origin are $d_1(\theta,r,l
)=r^2+l^2-2rlcos\theta$ $\forall \theta \in [0,2\pi]$.
\end{remark}
\vspace{-0.1 cm}

The aggregate interference for a BS at distance $l$ from the MU can be computed as $\bar{M}(t,r,l,m)=\int_{\boldsymbol{x}}m(\boldsymbol{x},t)\alpha(\boldsymbol{x},t)Z(r,l,m)$.
%\vspace{-0.1 cm}
%\begin{equation}
%\bar{M}(t,r,l,m)=\int_{\boldsymbol{x}}m(\boldsymbol{x},t)\alpha(\boldsymbol{x},t)Z(r,l,m)
%\end{equation}
where
$Z(r,l,m)$ is given by (\ref{Zeq}). The expected utility from any MU transmission is given by
\vspace{-0.4 cm}

\small
\begin{eqnarray}
&&\hspace{-0.7 cm}v(P(t),m, \boldsymbol{x})=\int_{r=0}^{R_{\max}} \hspace{-0.4 cm}2\pi r \lambda_u \int_{l=0}^{r_0} p_B(l) f^c_N(l,r) \xi_N(P(t),l,r,m)\nonumber\\
&&\hspace{1.5 cm}+(1-p_B(l))f^c_L(l,r) \xi_L(P(t),l,r,m) dl dr 
\end{eqnarray}
\normalsize
where
\small
%\begin{equation}
$\xi_k(P(t),m,\boldsymbol{x})=\frac{q(\gamma_k(P(t),m, \boldsymbol{x}))}{P(t)},$
%\end{equation}
%\vspace{-0.2 cm}
%\begin{equation}
$\gamma_k(P(t),l,r,m,\boldsymbol{x})=\frac{A_kl^{-\alpha_k}D(t)}{\sigma^2 + \bar{M}(t,l,r,m)}).$ \label{rate}
%\end{equation}

%\begin{equation}
%R_k(P(t),l,r,m, \boldsymbol{x})=B\log(1+\frac{A_kl^{-\alpha_k}D(t)P(t)}{\sigma^2 + \bar{M}(t,l,r,m)}) (gain missing) 
%\end{equation}
\normalsize
The objective of the reference player is to find the optimal power that maximizes the aggregate utility over $[0,T]$ i.e.
\vspace{-0.2 cm}
\begin{equation}
J=\sup_{P(0 \rightarrow T)} \mathbb{E} \Big[\int_{t=0}^T v(P(s),m,\boldsymbol{x})ds + \psi(\boldsymbol{x}(T))\Big]dt.
\end{equation}

In order to find the optimal policy $P(t)$ for a given meanfield $m(t)$, the reference MU uses the following Hamilton Jacobi Bellman (HJB) equation given by
\vspace{-0.3 cm}

\small
\begin{eqnarray}
&&\hspace{-1.2 cm}\partial_t V+\mu_\phi \partial_{\phi}V-P\partial_{E} V+\sigma_\phi \partial^2_\phi V+v(P(t),m(t), \boldsymbol{x}(t))=0, \label{HJB}
\end{eqnarray}
\normalsize
where $V$ is the aggregate utility starting from time $t$ i.e. $V=sup_{P(t \rightarrow T)} \mathbb{E} \Big[\int_{t=s}^T v(P(s),m,\boldsymbol{x})ds + \psi(\boldsymbol{x}(T))\Big].$
Thus, the best response is the one that minimizes the Hamiltonian. The Hamiltonian is given by
%\begin{equation}
%H(\boldsymbol{x},m,\nabla V)=+\frac{1}{\tau}(\mu_L - h(t))\partial_{h_L} V+\frac{1}{\tau}(\mu_N - h(t))\partial_{h_N} V+\mu_\phi \partial_{\phi}V-P\partial_{E} V +v(P(t),m(t)) \label{ham}
%\end{equation}
$H(\boldsymbol{x},m,\nabla V)=\mu_\phi \partial_{\phi}V-P\partial_{E} V +v(P(t),m(t)), \label{ham}$
where $\nabla$ is the gradient operator. Due to the dependence of the HJB equation of the meanfield $m(t)$, 
the mean-field equilibrium (MFE) is the solution of the HJB equation in (\ref{HJB}) as well as the Fokker-Planck-Kolomogorov equation in (\ref{FP}) .
\vspace{-0.7cm}
\section{Simulation Results}
\vspace{-0.2 cm}
For our simulations, we consider the following values:$\lambda_B=0.08/\text{m}^2$, $\lambda_u=0.03/\text{m}^2$, $\lambda_e=0.01/\text{m}^2$, $\alpha_N=3.88$, $\alpha_L=2.2$, $A_L=A_N=1$, $N_0=-147 \hspace{0.1 cm} \text{dBm/Hz}$, $P_{\max}=0.1 \hspace{0.1 cm} \text{W}$, $u_\phi=\pi/3$, and $\sigma_\phi=\pi/6$. We consider the case in which the MUs have large battery budget. The initial distribution of the users' orientation is chosen to be Gaussian with mean $\frac{\pi}{2}$ and variance $\frac{\pi}{4}$. The considered time duration is $[0,1]$ sec. The utility is computed for both the MFE and for a baseline in which the nodes transmit according to the path loss compensating power control policy.
% In this case, the MFE can be computed using a finite difference scheme.  

%\begin{figure}[t]
%	\centering
%	\includegraphics[width=6 cm,height=4cm,angle=0]{mmwavemfe3.pdf}
%	\caption{Mean-field equilibrium transmission power versus user's orientation
%	}\vspace{-0.4 cm}\label{f1}
%\end{figure}
%\begin{figure}[t]
%	\centering
%	\includegraphics[width=6 cm,height=4cm,angle=0]{mmwavemfenew.pdf}
%	\caption{Mean-field equilibrium transmission power versus user's orientation
%	}\vspace{-0.4 cm}\label{f2}
%\end{figure}

\begin{figure}[t]
\vspace{-0.2 cm}
\begin{minipage}[t]{0.48\linewidth}
    \includegraphics[width=\linewidth]{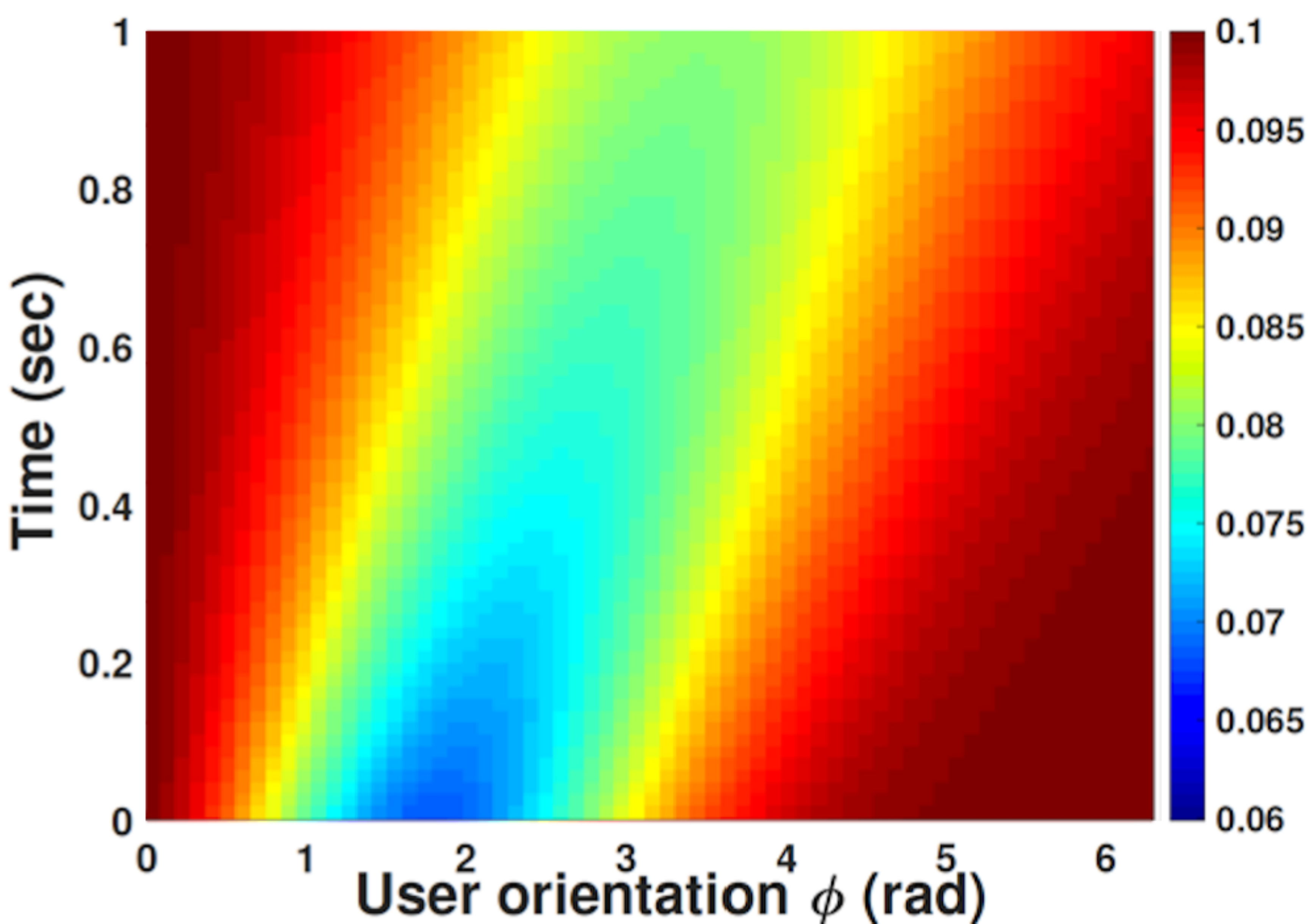}
    \caption{ MFE power versus user's orientation.}
    \label{f1}
\end{minipage}%
    \hfill%
\begin{minipage}[t]{0.48\linewidth}
    \includegraphics[width=\linewidth]{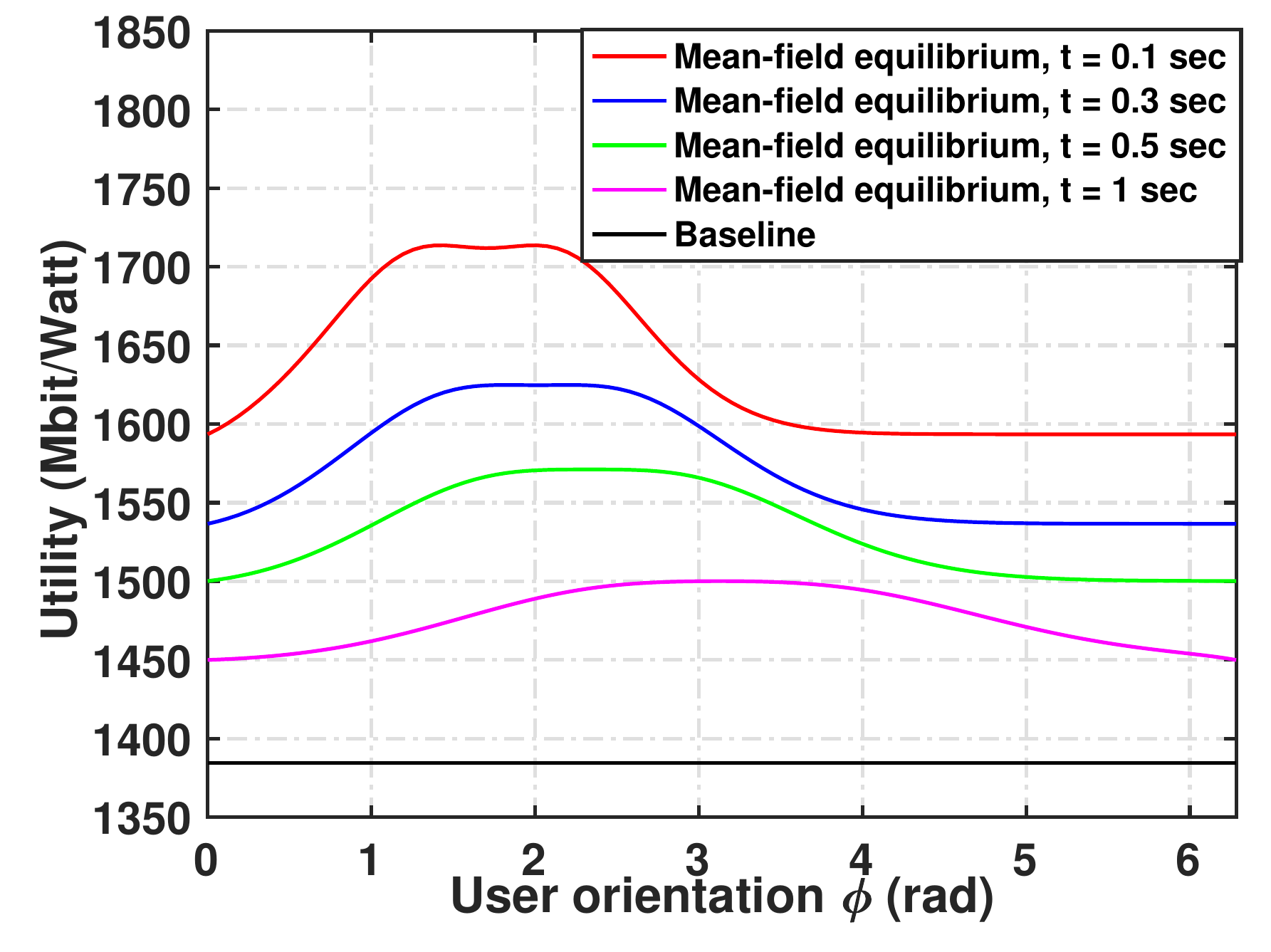}
    \caption{Utility versus user's orientation.}
    \label{f2}
\end{minipage} 
\vspace{-0.8 cm}
\end{figure}

Fig. \ref{f1} shows that, initially, the MFE transmission power decreases as the user orientation approaches $\frac{\pi}{2}$ since the initial distribution is normal with mean $\frac{\pi}{2}$. Thus, the highest proportion of users has an orientation of $\frac{\pi}{2}$, and, thus, users with orientation of $\frac{\pi}{2}$ transmit with the lowest power to reduce interference. As time increases, the distribution of users' orientation flattens, and, thus, the MFE transmission power decreases for a larger range of orientation. Hence, Fig. \ref{f1} shows that the MFE transmission power decreases with the proportion of users' orientation.

Fig. \ref{f2} shows that, at the MFE, the utility decreases with time for all values of the user orientation. This is because, as time increases, the MFE transmission power increases for a larger range of users' orientations, which increases the interference. Fig. \ref{f2} also shows that the  MFE achieves higher utility than the baseline. The increase in the utility when using the MFE reaches up to $24\%$ when $\phi=\frac{\pi}{2}$ and $t=0.1$ seconds.

\vspace{-0.3 cm}
\section{Conclusion}

In this paper, we have proposed a mean-field game  to solve the power control problem in ultra-dense millimeter wave network. Our results have quantified the performance gains achieved when using the mean-field approach compared to the baseline in which the nodes transmit with maximum power. Future work will consider applying our framework to more mobile scenarios.

\vspace{-0.2 cm}
\def\baselinestretch{0.75}

% that's all folks
\end{document}